\documentclass[11pt]{article}
\usepackage{amsmath,amssymb,bbm,amsthm}
\usepackage{fullpage}
\usepackage{thm-restate,color,xcolor,xspace}
\usepackage{hyperref,cleveref}
\usepackage{graphicx}
\usepackage{algorithm,algorithmic}

\newtheorem{theorem}{Theorem}
\newtheorem{lemma}[theorem]{Lemma}
\newtheorem{definition}[theorem]{Definition}

\begin{document}

\newcommand{\bu}{\mathbf u}
\newcommand{\bg}{\mathbf g}
\newcommand{\by}{\mathbf y}
\newcommand{\bc}{\mathbf c}

\title{Maximum Flow Without the Outer IPM}
\author{Jason Li\footnote{Carnegie Mellon University. \tt jmli@cs.cmu.edu} \and
Alex Wice\footnote{\tt alexwice@gmail.com}}
\date{\today}
\maketitle

\begin{abstract}
We show that the balancing weights technique of Li (2026) actually produces an approximate \emph{pseudo-circulation} of a directed, capacitated graph in $m^{1+o(1)}$ time. Together with standard flow techniques, we obtain an $m^{1+o(1)}$ time maximum flow algorithm that avoids the interior-point method framework of recent almost-linear time algorithms (Chen et al.\ FOCS 2022, van den Brand et al.\ FOCS 2024).
\end{abstract}

\section{Introduction}

Recent breakthroughs have obtained the first almost-linear time algorithms for maximum flow \cite{chen2025maximum,van2024almost}. The algorithms have two main components: an outer \emph{interior-point method} (IPM) that reduces maximum flow to $m\,\textup{polylog}\,n$ specialized subproblems, and an inner dynamic data structure that efficiently solves these subproblems in $n^{o(1)}$ amortized time. In this note, we show that the dynamic \emph{min-ratio cut} data structure from~\cite{van2024almost} can be adapted to compute a \emph{pseudo-circulation} of the input graph. The pseudo-circulation can be transformed to an approximate maximum flow by standard techniques, and iterating on the residual graph obtains an $m^{1+o(1)}$ time algorithm that avoids the outer IPM loop common to all previous approaches.

We begin with some preliminaries. For a directed graph, the out-boundary $\partial^+S$ is the set of arcs with only the tail in $S$, and the in-boundary $\partial^-S$ is the set of arcs with only the head in $S$. For a capacity function $\bc:E\to\mathbb R_{\ge0}$ on the arcs, let $\bc(\partial^+S)$ denote the total capacity of arcs in $\partial^+S$, and define $\bc(\partial^-S)$ similarly.

Given a capacitated graph $G=(V,E,\bc)$, we represent a flow as a function $f:E\to\mathbb R_{\ge0}$ satisfying $f(u,v)\le c(u,v)$ for all arcs $(u,v)\in E$. For a subset of arcs $S\subseteq E$, let $f(S)=\sum_{(u,v)\in S}f(u,v)$ denote the total flow sent along arcs in $S$. Given distinct vertices $s,t\in V$, an \emph{$s$--$t$ flow} is a flow that also satisfies flow conservation constraints, namely $f(\partial^+\{v\})=f(\partial^-\{v\})$ for all $v\in V\setminus\{s,t\}$. An \emph{$s$--$t$ cut} is a set $S\subseteq V$ satisfying $s\in S$ and $t\notin S$.

We now define a $\beta$-pseudo-circulation, which is exactly a standard circulation when $\beta=1$.

\begin{definition}[$\beta$-pseudo-circulation]
Given a parameter $\beta\ge1$, a \emph{$\beta$-pseudo-circulation} is a flow $f$ satisfying $f(\partial^+S)\le\beta\cdot f(\partial^-S)$ for all $S\subseteq V$. (Observe that this requirement is actually two-sided, i.e., $f(\partial^-S)\le\beta\cdot f(\partial^+S)$ follows from replacing $S$ by $V\setminus S$.)
\end{definition}

For a pseudo-circulation to approximately capture the maximum $s$--$t$ flow, the arc $(t,s)$ must exist and carry sufficient flow.

\begin{definition}
Given two vertices $s,t\in V$ and a capacitated graph that includes the arc $(t,s)$, the \emph{$s$--$t$ value} of a pseudo-circulation $f$ is $f(t,s)$.
\end{definition}

Our main technical contribution is the following.

\begin{theorem}\label{thm:main}
Consider a capacitated graph $G=(V,E,\bc)$ and distinct vertices $s,t\in V$ such that all arc capacities are integral and polynomially bounded, except for arc $(t,s)$ which has infinite capacity. Let $F$ be the value of the maximum $s$--$t$ flow. There is a deterministic $m^{1+o(1)}$ time algorithm that computes a $2$-pseudo-circulation with $s$--$t$ value at least $F/6$.
\end{theorem}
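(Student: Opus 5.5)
We will obtain the pseudo-circulation from a multiplicative-weights ("balancing weights") process run against the dynamic min-ratio cut data structure of~\cite{van2024almost}. The process maintains flow values $f(e)$ on the arcs and looks for a cut that witnesses a violation of the pseudo-circulation condition; if it finds one, it increases flow on the in-boundary of that cut. The property we need throughout is that the flow on $(t,s)$ can be kept large. To this end we fix the value of arc $(t,s)$ at a guessed target $\tau$, and guess $\tau$ by binary search over powers of $2$. This is possible because capacities are polynomially bounded, so there are only $O(\log n)$ guesses. Since $(t,s)$ has infinite capacity, fixing its value at $\tau$ never violates a capacity constraint.

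\textbf{Algorithm and termination.} We proceed in steps.
\begin{enumerate}
\item Start with $f(t,s)=\tau$ and $f(e)=0$ (or tiny) elsewhere.
\item Repeatedly query the data structure for a set $S$ maximizing the imbalance ratio $f(\partial^+S)/f(\partial^-S)$, or, more conveniently, an approximate minimizer of a ratio like $\bigl(\bc(\partial^-S)-f(\partial^-S)\text{-type potential}\bigr)/\bigl(f(\partial^+S)-2f(\partial^-S)\bigr)$.
\item If every cut has ratio at most $2$, stop: $f$ is a $2$-pseudo-circulation with $s$--$t$ value $\tau$.
\item Otherwise, multiplicatively increase $f$ on arcs of $\partial^-S$ with residual capacity, in proportion to their capacity.
\end{enumerate}
Termination follows from a potential argument. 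Each increase raises a potential like $\sum_e \bc(e)\phi(f(e)/\bc(e))$ by a fixed amount, and each arc can only be increased until it is saturated, so the number of data structure updates is $m^{1+o(1)}$ up to polylog factors. The dynamic data structure supports each update and query in $m^{o(1)}$ amortized time, since each step changes flow only on the reported cut boundary, which we charge to the potential. Determinism is inherited from the deterministic version of the structure in~\cite{van2024almost}.

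\textbf{Failure certificate.} If the process instead gets stuck, we find a cut $S$ on which $f(\partial^+S)>2f(\partial^-S)$ while $\partial^-S$ is nearly saturated, meaning $f(\partial^-S)\ge \bc(\partial^-S\setminus\{(t,s)\})/c'$ up to the constant from the approximate ratio. We then show such an $S$ must separate $s$ from $t$ with $(t,s)\in\partial^+S$. Otherwise the flow leaving $S$ is bounded by quantities charged to $\bc(\partial^-S)$, and a balancing argument rules it out. Once $(t,s)\in\partial^+S$, we get $\tau\le f(\partial^+S)\le$ (constant) $\cdot\,\bc(\partial^-S\setminus(t,s))$. The set $V\setminus S$ contains $s$ and not $t$, so by max-flow min-cut this capacity is at least $F$ only as an upper bound on the cut capacity. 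We therefore need the reverse inequality instead: stuckness at $\tau$ certifies a cut of capacity below $c\tau$, hence $F< c\tau$. Taking the largest $\tau$ that succeeds, and noting that $2\tau$ failed, gives $\tau\ge F/6$ once the constants from the ratio $2$ and the factor-$2$ gap in the guess are tracked.

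\textbf{Main obstacle.} The hard step is the stuck-case analysis. We must show that when no further balancing step is possible, the violating cut the data structure returns yields an $s$--$t$ cut of small capacity, and that the data structure's $m^{o(1)}$-approximate ratio does not blow up the constant beyond $6$. The first thing to try is to place the approximation slack into the update rule rather than the certificate: we run balancing with threshold $2$ but update only on cuts with ratio above $2$ times the approximation factor. If that loses too much, we fall back to the exact local structure of the balancing weights argument from Li (2026), applied to the single cut that is certified at the end.
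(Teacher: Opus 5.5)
\textbf{There is a genuine gap: the stuck-case certificate, on which your whole $\tau$-search rests, is false.} Your process only increases flow, with $f(t,s)$ pinned at $\tau$, so it cannot undo overshoot. An arc raised because it lies in $\partial^-T$ for some violated $T$ may later lie in $\partial^+S$ for a cut $S$ that does not separate $s$ and $t$. Once $\partial^-S$ is saturated you are stuck and learn nothing about $F$.

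Concretely, take arcs $(s,a),(a,t),(a,b)$ of capacity $100$, $(b,a)$ of capacity $1$, and the arc $(t,s)$, so $F=100$.
\begin{itemize}
\item Initially every violated cut has infinite ratio, so your rule may pick $T=\{t,b\}$ first.
\item Its update raises $(a,t)$ and $(a,b)$ equally, and fixing $T$ forces $f(a,b)\ge\tau/4$.
\item Now $S=\{s,t,a\}$ has $\partial^+S=\{(a,b)\}$ and $\partial^-S=\{(b,a)\}$. For every $\tau>8$ it stays violated forever with a saturated in-boundary.
\end{itemize}
With these legitimate choices your search can return $\tau\le 8<F/6$. So the claim that a stuck cut must have $(t,s)\in\partial^+S$ is false, not merely unproven.

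Even granting that claim, stuckness only gives $\bc(\partial^-S)\lesssim f(\partial^-S)<\tfrac12\bigl(\tau+f(\partial^+S\setminus\{(t,s)\})\bigr)$. The other arcs leaving $S$ can carry far more than $\tau$, so no bound $F\le\bc(\partial^-S)<c\tau$ follows. You noticed the inequality runs the wrong way and then asserted its reverse.

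The data structure of~\cite{van2024almost} also does not fit your scheme. It only $n^{o(1)}$-approximates $\min_S\bg(S)/\bu(\partial S)$, and it applies updates implicitly as potential shifts via \textsc{ToggleCut}. An $n^{o(1)}$-approximation of $f(\partial^+S)/f(\partial^-S)$ cannot certify a ratio of $2$. An explicit multiplicative update on all of $\partial^-S$ costs $|\partial S|$ per step.

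\textbf{The missing idea is to make the flow a function of vertex potentials,} $f(u,v)=1/(\max\{\by(v)-\by(u),0\}+1/c(u,v))$, so that every downhill arc is saturated.
\begin{itemize}
\item \textsc{ToggleCut} on a cut $C$ with $w(\partial^+C)\ll w(\partial^-C)$ raises flow out of $C$ and lowers flow into $C$ at the same time. Flows move in both directions, there is no stuck state, and termination follows from the energy $\mathcal E$.
\item With gradient $\bg=\sum_{(u,v)}w(u,v)(\mathbbm 1_u-\mathbbm 1_v)$, the ratio $\bg(S)/w(\partial S)$ is near $0$ exactly at balance. The $\alpha=n^{o(1)}$ approximation therefore only slows the energy decrease to $\Omega(\epsilon^2/\alpha^3)$ per step; it does not weaken the balance guarantee.
\end{itemize}

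The $F/6$ bound then needs no failure certificate and no guess of $\tau$. Replace $(t,s)$ by $m/2$ anchors of capacity $U$.
\begin{itemize}
\item If $\by(s)-\by(t)<1/U$, the anchors carry at least $mU/4\ge F/4$.
\item Otherwise, every level set $S_\lambda=\{v:\by(v)\ge\lambda\}$ with $\lambda\in(\by(t),\by(s))$ is an $s$--$t$ cut whose out-arcs are downhill, hence saturated. This gives $f(\partial^+S_\lambda)\ge F$, and so $f(\partial^-S_\lambda)\ge F/2$.
\item Integrating over $\lambda$, each non-anchor contributes at most $1$ while the anchors contribute at least $m/4$. So the anchors carry at least a third of the integral, which gives $A\ge F/6$.
\end{itemize}
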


\section{From pseudo-circulation to maximum flow}

We first show how to convert a pseudo-circulation into an $s$--$t$ flow of approximately the same value. The main observation is that a pseudo-circulation induces a capacitated subgraph that is \emph{balanced}.

\begin{definition}[Balance]
A capacitated, strongly connected directed graph $G=(V,E,\bc)$ is \emph{$\beta$-balanced} if $\bc(\partial^+S)\le\beta\cdot\bc(\partial^-S)$ for all $S\subseteq V$.
\end{definition}

Consider the $2$-pseudo-circulation with $s$--$t$ value at least $F/6$ guaranteed by \Cref{thm:main}. It induces a capacitated graph $G|_f=(V,E,f)$ that is $2$-balanced by definition. We now claim that $G|_f$ supports a constant fraction of the maximum $s$--$t$ flow from $G$.

\begin{lemma}
The maximum $s$--$t$ flow in $G|_f$ is at least $F/12$.
\end{lemma}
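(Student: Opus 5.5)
By max-flow/min-cut, the plan is to lower-bound the capacity in $G|_f$ of every $s$--$t$ cut, working with the graph without the arc $(t,s)$. The capacity function of $G|_f$ is $f$ itself.

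Fix an $s$--$t$ cut $S$, so $s\in S$ and $t\notin S$. The arc $(t,s)$ has its head $s$ in $S$ and its tail $t$ outside $S$, so $(t,s)\in\partial^-S$. Applying the $2$-pseudo-circulation property to $V\setminus S$ (equivalently, the two-sided form noted after the definition) gives
\[
f(\partial^-S)\le 2\,f(\partial^+S).
\]
Since $(t,s)\in\partial^-S$, the left side is at least $f(t,s)\ge F/6$. Hence $f(\partial^+S)\ge F/12$.

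The arc $(t,s)$ points into $S$, so it does not lie in $\partial^+S$. Therefore the out-capacity of $S$ in $G|_f$ is at least $F/12$ even after the arc $(t,s)$ is deleted. This holds for every $s$--$t$ cut, so the minimum $s$--$t$ cut in $G|_f$ (with or without $(t,s)$) has capacity at least $F/12$. Max-flow/min-cut then gives an $s$--$t$ flow of value at least $F/12$ in $G|_f$.

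This flow is dominated by $f$, which is in turn dominated by $\bc$, so it is also feasible in $G$. It can be decomposed into $s$--$t$ paths that avoid $(t,s)$, since any flow along $(t,s)$ only forms cycles and can be cancelled.

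The only subtle step is getting the direction of the inequality right. We need the bound $f(\partial^-S)\le 2f(\partial^+S)$, not the stated one, and this requires invoking the definition on the complement set.
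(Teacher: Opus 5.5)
Your proof is correct and is the paper's argument: for every $s$--$t$ cut $S$ you apply the two-sided pseudo-circulation inequality $f(\partial^+S)\ge f(\partial^-S)/2$, use $(t,s)\in\partial^-S$ with $f(t,s)\ge F/6$, and conclude by max-flow/min-cut. Your extra remarks (that $(t,s)\notin\partial^+S$, so deleting it changes no cut capacity, and that the resulting flow is feasible in $G$) are correct but not needed for the lemma itself.
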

\begin{proof}
By the max-flow/min-cut theorem, it suffices to show that any $s$--$t$ cut $S\subseteq V$ satisfies $f(\partial^+S)\ge F/12$. Since $f$ is a $2$-pseudo-circulation, we have $f(\partial^+S)\ge f(\partial^-S)/2$. Finally, $f(\partial^-S)\ge F/6$ since $(t,s)\in\partial^-S$ and the pseudo-circulation has value $f(t,s)\ge F/6$.
\end{proof}

It is known that balanced graphs behave similarly to undirected graphs and admit approximate maximum flow algorithms:

\begin{theorem}[Theorem 2.8 of~\cite{ene2016routing}]
Given a $\beta$-balanced directed graph and a parameter $\epsilon>0$, there is an algorithm that finds a $(1-\epsilon)$-approximate $s$-$t$ max-flow in randomized $\tilde{O}(m\beta^2/\epsilon^2)$ time.
\end{theorem}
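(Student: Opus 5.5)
Since this theorem is imported from~\cite{ene2016routing}, the plan is to reconstruct its proof along the lines of Sherman's almost-linear-time undirected max-flow framework, with balance used to transfer undirected cut information to the directed setting at a cost of a factor $\beta$. First I would reduce approximate max-flow to a decision problem by binary search on the value $F'$ over $O(\log(m\cdot\max c))$ candidates, with $\max c$ the largest capacity. For a candidate $F'$, the task is to route the demand $b=F'(\mathbf 1_t-\mathbf 1_s)$ with directed congestion at most $1+\epsilon$, or else certify that some cut has capacity below $F'$.

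The key structural step is to compare the directed graph with its undirected symmetrization $H$, in which each arc $(u,v)$ becomes an undirected edge of capacity $c(u,v)$. For every $S\subseteq V$, the undirected cut value is $c_H(S)=\bc(\partial^+S)+\bc(\partial^-S)$. By $\beta$-balance, $\bc(\partial^-S)\le\beta\,\bc(\partial^+S)$, so
\[
\bc(\partial^+S)\;\le\; c_H(S)\;\le\;(1+\beta)\,\bc(\partial^+S).
\]
Thus directed cut values are determined up to an $O(\beta)$ factor by undirected ones. I would then build an $\alpha$-congestion approximator $R$ for $H$ with $\alpha=\textup{polylog}\, n$ (via Räcke-type tree embeddings or the Sherman/Peng construction) in $\tilde O(m)$ time. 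By the display, $R$ is an $O(\alpha\beta)$-congestion approximator for directed demands.

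Next I would run Sherman's potential-reduction, i.e.\ gradient descent on
\[
\phi(f)=\mathrm{smax}\bigl(f/\bc\bigr)+\mathrm{smax}\bigl(2\alpha\beta\, R(b-Bf)\bigr),
\]
where $f$ ranges over signed flows. Positive flow on an arc is charged against its capacity. Flow sent against an arc is charged against the total capacity of the reverse direction, and balance bounds how much that reverse direction can absorb. Each step costs $\tilde O(m)$: one multiplication by $R$ and $R^\top$ and one pass over the arcs. A standard $\ell_\infty$ smoothness analysis bounds the number of iterations by $\tilde O(\alpha^2\beta^2/\epsilon^2)$. With $\alpha$ polylogarithmic this gives the claimed $\tilde O(m\beta^2/\epsilon^2)$ bound. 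Termination is certified by the potential: either it drops below its threshold, giving a flow with congestion $1+\epsilon$ and a small residual demand, or the gradient yields a dual potential vector, i.e.\ a cut certifying $F'$ is too large up to $1+\epsilon$. I would route the small residual demand along a maximum-weight spanning tree of $H$ and then scale, losing only an $\epsilon$ fraction of the value.

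The main obstacle is direction. The gradient method naturally produces signed flows, but the output must be a genuine directed flow. My first attempt would be the following. For each arc, decompose the backward flow it carries. Reroute that backward flow along forward paths, using the fact that in a $\beta$-balanced graph any directed cut can return at least a $1/\beta$ fraction of what it sends. This is a directed-cut argument: take the circulation needed to cancel the backward flow and apply max-flow/min-cut with the balance inequality. It gives a rerouting with congestion blow-up $O(\beta)$, which is absorbed by running the descent with the stronger penalty above. If this blow-up were instead compounded across iterations, the dependence would degrade to $\beta^3$ or worse. So the delicate part is to incorporate the asymmetric capacities directly into the congestion term of $\phi$, so that the cancellation happens only once at the end.
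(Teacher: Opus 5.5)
The paper gives no proof of this statement. It is imported from \cite{ene2016routing} and used only as a black box (with $\beta=2$, $\epsilon=1/2$), so your reconstruction has to stand on its own. The undirected part is sound. Balance gives $\bc(\partial^+S)\le c_H(S)\le(1+\beta)\bc(\partial^+S)$, so an undirected congestion approximator has quality $O(\alpha\beta)$ for directed demands, which is where $\beta^2$ should come from.

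There is, however, a genuine gap exactly at the step you call the main obstacle, and your proposed fix does not close it.
\begin{itemize}
\item As written, $\phi$ does not control how much flow an arc carries against its direction. With the usual symmetric $\mathrm{smax}$, an arc can end up carrying $(1+\epsilon)c(u,v)$ backwards; ``charged against the total capacity of the reverse direction'' does not define a per-arc penalty. The weight $2\alpha\beta$ only rescales the residual term, which is unrelated to backward flow.
\item Your own cut argument then shows what cancellation costs. Across a cut $S$ the backward flow can be as large as $(1+\epsilon)\bc(\partial^-S)\le(1+\epsilon)\beta\,\bc(\partial^+S)$. So the rerouted flow has congestion $\Theta(\beta)$, and after rescaling you get only an $\Omega(1/\beta)$-approximation. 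A multiplicative $O(\beta)$ loss cannot be absorbed into a $(1-\epsilon)$ guarantee, whether it is paid once or in every iteration.
\item Routing the final residual on a maximum spanning tree of $H$ has the same problem, since it also uses arcs backwards.
\item Max-flow/min-cut plus balance only shows that the cancelling routing \emph{exists}. Computing it is a directed demand-routing problem in $G$, i.e.\ the very problem being solved, so this step is circular as stated. For the paper's use ($\beta=2$) a constant factor would suffice, but the circularity would remain.
\end{itemize}

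What is missing is a mechanism that keeps backward flow to an $\epsilon/\beta$ fraction of capacity. For example, charge flow sent against arc $e$ relative to a reverse capacity $\epsilon\,c(e)/\beta$. Then:
\begin{itemize}
\item By balance, every cut of this augmented graph has capacity between $\bc(\partial^+S)$ and $(1+\epsilon)\bc(\partial^+S)$. Its optimum is therefore within $1+\epsilon$ of the directed one, and the undirected approximator still has quality $O(\alpha\beta)$.
\item At the end, the backward flow across any cut is at most $(1+\epsilon)\epsilon\,\bc(\partial^+S)$. The cancelling demand is thus routable in $G$ with congestion $O(\epsilon)$.
\item That routing can be found by recursing on this much smaller instance. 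Each level contributes geometrically less congestion, until the remainder is negligible.
\end{itemize}
You would still have to verify that such an asymmetric penalty, whose two sides differ by a factor $\beta/\epsilon$, does not spoil the iteration bound. A naive $\ell_\infty$-smoothness analysis of the corresponding softmax loses an extra $(\beta/\epsilon)^2$. So your ``standard'' $\tilde O(\alpha^2\beta^2/\epsilon^2)$ claim needs a real argument, for instance treating the congestion term as a composite/proximal term rather than through smoothness.
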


Setting $\epsilon=1/2$, the algorithm computes a half-approximate maximum $s$--$t$ flow on $G|_f$, which has value at least $F/24$. The same $s$--$t$ flow is therefore a constant-approximate maximum flow on the original graph $G$. Round the flow to an integral flow, update the residual graph, and iterate this process $O(\log m)$ times to obtain an exact maximum flow on a graph with integral and polynomially-bounded arc capacities. For integral arc capacities in the range $\{1,2,\ldots,U\}$, standard reductions incur an additional $\log U$ factor.

\section{Pseudo-circulation implies approximation}

In this section, we construct a pseudo-circulation that establishes \Cref{thm:main}. Similar to~\cite{li2026balancing}, the pseudo-circulation is determined by a potential function $\by:V\to\mathbb R$ on the vertices. For a given potential function $\by$, the corresponding flow on arc $(u,v)$ is
\[ f(u,v)=\frac1{\max\{\by(v)-\by(u),\,0\}+1/c(u,v)} .\]
In particular, we have $f(u,v)=\frac1{\by(v)-\by(u)+1/c(u,v)}$ whenever $\by(v)\ge\by(u)$, and $f(u,v)=c(u,v)$ whenever $\by(v)\le\by(u)$. Viewing the potential function as heights on the vertices, observe that the flow must send full capacity along ``downward'' arcs $(u,v)$ with $\by(v)\le\by(u)$. We can interpret this flow as a continuous version of the push-relabel algorithm, where sufficiently ``downward'' arcs also send full flow. Compared to push-relabel, our flow has the advantage of being completely determined by the potential function. On the other hand, our flow is not a proper $s$--$t$ flow or even a pre-flow, so instead we aim for a pseudo-circulation.

In the next section, we obtain an algorithm that constructs a potential function that induces a pseudo-circulation.

\begin{restatable}[Finding a pseudo-circulation]{lemma}{Finding}\label{lem:finding}
Consider a strongly connected, capacitated graph $G=(V,E,\bc)$ with integral and polynomially-bounded arc capacities, and consider distinct vertices $s,t\in V$. For any parameter $\epsilon>0$, there is a deterministic $m^{1+o(1)}/\epsilon^3$ time algorithm that computes a potential function $\by$ whose corresponding flow is a $(1+\epsilon)$-pseudo-circulation.
\end{restatable}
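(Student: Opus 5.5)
We will follow the balancing-weights framework of \cite{li2026balancing}, treating the potential $\by$ as the variable of a convex potential function whose near-stationary points are exactly the potentials inducing approximate pseudo-circulations. For each arc $(u,v)$ define the convex function $\phi_{uv}(x)=\ln(\max\{x,0\}+1/c(u,v))$ of the height difference $x=\by(v)-\by(u)$; its derivative is $f(u,v)$ when $x>0$ and is bounded by $c(u,v)$ near $x\le 0$. Taking a smoothed version and the objective $\Phi(\by)=\sum_{(u,v)}\phi_{uv}(\by(v)-\by(u))$, the derivative of $\Phi$ in the direction $\mathbf 1_S$ (raising all vertices of $S$ by one unit) equals, up to smoothing error, $f(\partial^-S)-f(\partial^+S)$ with signs arranged so that a violated cut $f(\partial^+S)>(1+\epsilon)f(\partial^-S)$ is a direction of descent. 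Thus the plan is: (1) show that $\Phi$ is bounded below on the relevant domain (using strong connectivity, which prevents potentials from escaping to infinity without some arc paying), (2) show that whenever $f$ is not a $(1+\epsilon)$-pseudo-circulation there exists a cut $S$ whose shift decreases $\Phi$ by an amount proportional to $\epsilon$ times a normalized quantity, and (3) run a cut-based descent: repeatedly find an (approximately) most violated cut and raise or lower the potentials on it.

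For step (2) we would quantify the progress of a cut shift. Raising $S$ by $\delta$ changes each out-arc's flow from $1/(x+1/c)$ to $1/(x+\delta+1/c)$ and each in-arc's similarly, so the first-order gain is $\delta(f(\partial^+S)-f(\partial^-S))$ while the second-order term is controlled by $\sum f^2\delta^2$ over boundary arcs; choosing $\delta$ relative to the smallest $1/f$ on the boundary (a multiplicative step, changing each boundary flow by at most a constant factor) gives decrease of order $\epsilon$ times a $\log$-scale amount. Because capacities are integral and polynomially bounded, $\Phi$ ranges over an interval of size $m\cdot O(\log n)$ relevant scale, so a potential-decrease argument bounds the number of phases by roughly $\tilde O(m/\epsilon^2)$ cut updates, giving total work $m^{1+o(1)}/\epsilon^3$ once each update is amortized $m^{o(1)}/\epsilon$.

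Step (3) is the main obstacle: finding violated cuts quickly while the potentials, hence arc weights $f$, change. Here we would invoke the dynamic min-ratio cut data structure of \cite{van2024almost}, maintaining the ratio $f(\partial^+S)/f(\partial^-S)$ (or its linearization with gradients as "demands" and $f$ as lengths/capacities) under the arc updates caused by each shift, with $m^{o(1)}$ amortized update and query time and deterministic guarantees. The data structure returns only an $m^{o(1)}$-approximate best cut, so we must check that approximate ratio suffices: we would argue that whenever some cut has violation $\epsilon$, the returned cut has gain at least $\epsilon/m^{o(1)}$ in the normalized progress measure, which only costs an $m^{o(1)}$ factor in the iteration count. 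We also must ensure each cut shift changes only $m^{o(1)}$ amortized arc weights by more than a $(1+\epsilon)$ factor, which we would enforce by lazily updating $f$ in the data structure only when it drifts by a constant factor and absorbing the remainder into the $\epsilon$ slack. Termination then yields $\by$ with no cut violated by more than $1+\epsilon$, as required.
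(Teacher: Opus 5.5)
There is a genuine gap. Your overall architecture matches the paper's: the min-ratio cut data structure on the undirected graph with arc directions encoded in the gradient, lazy weight updates, and counting steps against a decreasing potential. The problem is that the potential you propose to descend on is the wrong function, and with it step (2) fails.

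\textbf{The potential is wrong.} Each $\phi_{uv}(x)=\ln(\max\{x,0\}+1/c(u,v))$ is nondecreasing in $x=\by(v)-\by(u)$. It is concave, not convex, for $x>0$, and flat for $x<0$. So on downhill arcs its derivative is $0$ rather than $f(u,v)=c(u,v)$. This is not a smoothing error, because downhill arcs carry full capacity.

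More decisively, $\Phi(\by)\ge\sum_{(u,v)}\ln(1/c(u,v))$, with equality iff no arc goes uphill. In a strongly connected graph that means $\by$ is constant. So the initial $\by=0$, where $f=\bc$ is in general unbalanced, is already a global minimizer of $\Phi$, and no cut shift can make progress.

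Your sign is also backwards. Raising $S$ increases $f$ on $\partial^+S$ and decreases it on $\partial^-S$. So raising $S$ when $f(\partial^+S)>(1+\epsilon)f(\partial^-S)$ makes the violation worse.

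\textbf{What the paper uses instead.} The paper uses the convex energy $\mathcal E(u,v)=\int_{\by(v)-\by(u)}^M\frac{dx}{\max\{x,0\}+1/c(u,v)}$. Its derivative in $\by(v)-\by(u)$ is exactly $-f(u,v)$ everywhere, including the linear piece $c(u,v)(\by(u)-\by(v))$ on downhill arcs. Raising a cut $C$ with $w(\partial^+C)<w(\partial^-C)$ by $\eta$ therefore changes the total energy by about $\eta(f(\partial^+C)-f(\partial^-C))<0$.

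That downhill linear piece is also what makes your step (1) true. First, $\mathcal E\ge0$ by the definition of $M$. Second, if $\by(u)-\by(v)=M$, some arc on a $u$--$v$ path of at most $n-1$ arcs drops by at least $M/(n-1)$, so it has energy at least $M/(n-1)$. Against the monotone decrease from the initial $m(\ln(M+1)+\ln U)$, this forces $M=n^{O(1)}$.

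Simply negating your $\Phi$ does not repair this. Without the downhill term, raising any nonempty proper $S$ by $\lambda\to\infty$ sends $-\Phi$ to $-\infty$ through the arcs entering $S$.

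\textbf{Tolerances must be $\Theta(\epsilon/\alpha)$, not constant.} The detected cut only satisfies $w(\partial^+C)\le\frac{\alpha-\epsilon}{\alpha+\epsilon}w(\partial^-C)$. So the first-order gain is only an $O(\epsilon/\alpha)$ fraction of $\eta f(\partial^-C)$. If stored weights may drift from $f$ by a constant factor, or a step may change boundary flows by a constant factor, that gain is swamped and the energy can increase. Likewise, at termination, $(1+\epsilon)$-balance of $w$ would yield only $O(1)$-balance of $f$, not the required $(1+\epsilon)$.

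The paper handles this as follows:
\begin{itemize}
\item It sets $\Delta=\epsilon/(8\alpha)$ and $\eta=\epsilon/(16\alpha u)$.
\item Here $u\ge\bu(\partial C)\ge w(e)\ge f(e)/2$ for every boundary arc $e$. This also respects the interface requirement $\eta\le1/u$, which your step sized by $1/\max_e f(e)$ need not.
\item The stability lemma then keeps every boundary flow within a $1\pm\epsilon/(8\alpha)$ factor across the step, so each toggle lowers the energy by $\Omega(\epsilon^2/\alpha^3)$.
\item The $1/\Delta$ factor in the edge-update count supplies the third factor of $1/\epsilon$. This is consistent with your claimed $m^{o(1)}/\epsilon$ amortized cost, but not with your constant-factor drift rule.
\end{itemize}
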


Surprisingly, finding a potential function that induces a pseudo-circulation is the only difficult task: we show that if the graph is \emph{preprocessed} beforehand, then this pseudo-circulation must have large $s$--$t$ value, fulfilling the requirement of \Cref{thm:main}.

We first describe the preprocessing, which is similar to Section~1.4 of~\cite{li2026balancing}. First, compute strongly connected components of the graph and discard any strongly connected component not containing $s$ or $t$. If $s$ and $t$ do not belong in the same strongly connected component, then the existence of arc $(t,s)$ means that there is no path from $s$ to $t$, so the maximum flow value is $0$ and the algorithm can return the empty circulation. Otherwise, the remaining strongly connected component contains $s$ and $t$, and it is not hard to see that the maximum $s$--$t$ flow on this component is still $F$.

Next, assume that all finite arc capacities are in the range $\{1,2,\ldots,U\}$ for some $U=n^{O(1)}$. We replace the infinite-capacity arc $(t,s)$ by many parallel arcs $(t,s)$ of capacity $U$ each, called \emph{anchors}, such that for the new number of arcs $m$, there are exactly $m/2$ anchors. The value of a pseudo-circulation is now defined as the total flow sent along the anchors $(t,s)$.

\begin{lemma}[Pseudo-circulation implies approximation]
Consider distinct vertices $s,t\in V$ and a preprocessed graph $G=(V,E,\bc)$, and consider a potential function $\by$ that induces a $2$-pseudo-circulation $f$. Then, the $s$--$t$ value of $f$ is at least $F/6$.
\end{lemma}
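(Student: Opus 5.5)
The plan is to bound the potential gap $d=\max\{\by(s)-\by(t),0\}$ by averaging over threshold cuts of $\by$. Once $d$ is bounded, the anchor flow is large because each of the $m/2$ anchors carries $1/(d+1/U)$.

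\textbf{Easy case.} Suppose $\by(s)\le\by(t)$. Then every anchor $(t,s)$ is a downward arc and carries its full capacity $U$, so the $s$--$t$ value is $mU/2$. The maximum flow value $F$ is at most the total capacity of the non-anchor arcs, and there are $m/2$ of these, each of capacity at most $U$. Hence $F\le mU/2$, and the value is at least $F\ge F/6$.

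\textbf{Main case.} Suppose $\by(s)>\by(t)$, so $d=\by(s)-\by(t)>0$. For each $\theta\in(\by(t),\by(s))$, let $S_\theta=\{v:\by(v)>\theta\}$. Then $s\in S_\theta$ and $t\notin S_\theta$, so $S_\theta$ is an $s$--$t$ cut.
\begin{itemize}
\item Every arc of $\partial^+S_\theta$ goes from potential above $\theta$ to potential at most $\theta$, so it is downward and carries its full capacity. All anchors lie in $\partial^-S_\theta$, so $\partial^+S_\theta$ contains only non-anchor arcs. By max-flow/min-cut in the preprocessed graph, which still has maximum $s$--$t$ flow $F$, we get $f(\partial^+S_\theta)=\bc(\partial^+S_\theta)\ge F$.
\item By the $2$-pseudo-circulation property, $f(\partial^-S_\theta)\ge f(\partial^+S_\theta)/2\ge F/2$.
\item Integrating over $\theta$ gives $\int_{\by(t)}^{\by(s)} f(\partial^-S_\theta)\,d\theta\ge dF/2$.
\end{itemize}

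Now bound the same integral from above, arc by arc. An arc $(u,v)$ lies in $\partial^-S_\theta$ only when $\by(u)\le\theta<\by(v)$, which is an interval of length $g=\by(v)-\by(u)>0$. Its contribution to the integral is therefore at most
\[ f(u,v)\cdot g=\frac{g}{g+1/c(u,v)}<1 .\]
Summing over all $m$ arcs, anchors included, the integral is less than $m$. Hence $dF/2<m$, that is, $d<2m/F$. We may assume $F>0$, since otherwise the claim is trivial.

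\textbf{Conclusion.} The total anchor flow is
\[ \frac{m/2}{d+1/U}\ \ge\ \frac{m/2}{2m/F+1/U}. \]
Using $F\le mU/2\le mU$, we have $1/U\le m/F$. So the denominator is at most $3m/F$, and the value is at least $F/6$.

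The main obstacle is the arc-by-arc bound $f(u,v)\cdot(\by(v)-\by(u))<1$. It is what makes the specific flow formula pay off: each arc can obstruct the threshold cuts only in proportion to the potential span it covers, so the anchors must absorb most of the required in-flow unless the potential gap $d$ is small. The remaining checks are routine. Downward arcs are saturated even at ties, since $\max\{\cdot,0\}$ handles equality. Strict versus non-strict thresholds only affect a set of $\theta$ of measure zero.
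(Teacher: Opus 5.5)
Your proof is correct and follows essentially the same route as the paper. Both arguments use threshold cuts whose out-boundaries consist of saturated downhill arcs, integrate $f(\partial^-S_\theta)$ over $(\by(t),\by(s))$, and bound each arc's contribution by $f(u,v)(\by(v)-\by(u))<1$. The only differences are cosmetic: you bound the whole integral by $m$ to get $d<2m/F$ and absorb the $1/U$ term via $F\le mU$, whereas the paper splits on whether $\by(s)-\by(t)\ge 1/U$ and argues that the anchors account for at least a third of the integral.
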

\begin{proof}
If $\by(s)-\by(t)<1/U$, then each of the $m/2$ anchors $(t,s)$ sends at least $U/2$ flow, so the $s$--$t$ value is at least $m/2\cdot U/2=mU/4\ge F/4$. For the rest of the proof, assume that $\by(s)-\by(t)\ge1/U$.

For each $\lambda\in(\by(t),\by(s))$, define $S_\lambda=\{v\in V:\by(v)\ge\lambda\}$, which is an $s$--$t$ cut. By the max-flow/min-cut theorem, the arcs in $\partial^+S_\lambda$ have total capacity at least $F$. All of these arcs $(u,v)$ go ``downhill'', i.e., $\by(u)>\by(v)$, so the flow sends full capacity along each arc, i.e., $f(u,v)=c(u,v)$. It follows that $f(\partial^+S_\lambda)\ge F$ for all $\lambda\in(\by(t),\by(s))$. Since $f$ is a $2$-pseudo-circulation, we also have $f(\partial^-S_\lambda)\ge F/2$.

We now consider the integral
\[ I=\int_{\by(t)}^{\by(s)} f(\partial^-S_\lambda)\,d\lambda. \]
Since $f(\partial^-S_\lambda)\ge F/2$ for all $\lambda\in(\by(t),\by(s))$, we have
\[ I\ge(\by(s)-\by(t))\cdot F/2 .\]
Now consider the individual contribution of each arc to the integral. Let $A=m/2\cdot\frac1{\by(s)-\by(t)+1/U}$ denote the total flow sent along the anchors. Each anchor $(t,s)$ is always in $f(\partial^-S_\lambda)$, so the total contribution of anchors to $I$ is
\[ A\cdot(\by(s)-\by(t))=m/2\cdot\frac{\by(s)-\by(t)}{\by(s)-\by(t)+1/U}\ge m/4 ,\]
where the inequality uses $\by(s)-\by(t)\ge1/U$.
For every non-anchor $(u,v)$ with $\by(u)<\by(v)$, it contributes exactly $1/(\by(v)-\by(u)+1/c(u,v))$ to $f(\partial^-S_\lambda)$ for all $\lambda$ in an interval of length $\by(v)-\by(u)$, so its overall contribution is at most $1$. It follows that the $m/2$ non-anchors contribute at most $m/2$ to $I$.

We conclude that the anchors contribute at least $1/3$ fraction to the integral $I$. It follows that
\[ A\cdot(\by(s)-\by(t))\ge I/3\ge(\by(s)-\by(t))\cdot F/6 ,\]
so $A\ge F/6$, concluding the proof.
\end{proof}

\section{Finding a pseudo-circulation}

For the rest of this note, we prove \Cref{lem:finding}, restated below.

\Finding*

Our algorithm essentially generalizes Section~3 of~\cite{li2026balancing} to capacitated graphs. For completeness and consistency, we repeat the entire section almost verbatim, making changes only when necessary.

We interpret the balance of a flow as the \emph{minimum ratio cut} of the underlying \emph{undirected} graph, and then use the dynamic minimum ratio cut data structure from a recent interior-point method-based flow algorithm~\cite{van2024almost}.

We refer to edges of an undirected graph as \emph{edges} instead of arcs, and for a cut $(S,V\setminus S)$ in an undirected or directed graph, the undirected boundary $\partial S$ is the set of edges/arcs with exactly one endpoint in $S$. We now define the minimum ratio cut of an undirected graph, whose input includes a \emph{gradient function} $\bg:V\to\mathbb R$ with $\bg\perp\mathbbm 1$, i.e., $\sum_{v\in V}\bg(v)=0$.

\begin{definition}
The \emph{minimum ratio cut} of a positively weighted, undirected graph with gradient function $\bg\perp\mathbbm 1$ is the minimum ratio $\bg(S)/w(\partial S)$ over all $S\subseteq V$ with $w(\partial S)\ne0$.
\end{definition}

Observe that the requirement $\bg\perp\mathbbm 1$ means that $\bg(S)+\bg(V\setminus S)=0$ for any $S\subseteq V$, so either $S$ or $V\setminus S$ has non-positive ratio. Intuitively, the goal is to maximize $|\bg(S)|$ while minimizing $w(\partial S)$, and then take either $S$ or $V\setminus S$, whichever has non-positive ratio.

We construct our minimum ratio cut instance as follows. Given a weighted directed graph $G=(V,E,w)$, consider the underlying undirected graph together with the gradient function $\bg:V\to\mathbb R$ defined as
\[ \bg=\sum_{(u,v)\in E}w(u,v)\cdot(\mathbbm 1_u-\mathbbm 1_v) ,\]
i.e., for each arc $(u,v)$ in the directed graph, add $w(u,v)$ to the tail $u$ and subtract $w(u,v)$ from the head $v$. In other words, we ``undirect'' the graph and encode the arc directions through a gradient function.

\begin{lemma}
For any set $S\subseteq V$, we have $\bg(S)=w(\partial^+S)-w(\partial^-S)$.
\end{lemma}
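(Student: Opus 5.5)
The plan is to expand $\bg(S)$ by linearity and sort the arcs by where their endpoints lie. By definition,
\[ \bg(S)=\sum_{v\in S}\bg(v)=\sum_{(u,v)\in E}w(u,v)\bigl(\mathbbm 1_u(S)-\mathbbm 1_v(S)\bigr), \]
where $\mathbbm 1_x(S)$ equals $1$ if $x\in S$ and $0$ otherwise. So it suffices to compute the contribution $w(u,v)\bigl(\mathbbm 1_u(S)-\mathbbm 1_v(S)\bigr)$ of each arc separately and then sum.

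There are four cases according to the positions of the tail $u$ and the head $v$.
\begin{itemize}
\item If both endpoints lie in $S$, or both lie in $V\setminus S$, then $\mathbbm 1_u(S)=\mathbbm 1_v(S)$, so the arc contributes $0$.
\item If $u\in S$ and $v\notin S$, then the arc is in $\partial^+S$ and contributes $+w(u,v)$.
\item If $u\notin S$ and $v\in S$, then the arc is in $\partial^-S$ and contributes $-w(u,v)$.
\end{itemize}

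Summing over all arcs gives $\bg(S)=w(\partial^+S)-w(\partial^-S)$, which is the claim.

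There is no real obstacle here; the only point to check is that $\partial^+S$ and $\partial^-S$ are defined exactly as arcs with only the tail, respectively only the head, in $S$. This matches the case split above. As a consistency check, taking $S=V$ gives $\bg(V)=0$, which agrees with $\bg\perp\mathbbm 1$.
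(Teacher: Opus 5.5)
Your proposal is correct and follows the paper's own argument: expand $\bg(S)$ arc by arc, observe that an arc with both endpoints on the same side of $S$ contributes $0$, an arc in $\partial^+S$ contributes $+w(u,v)$, and an arc in $\partial^-S$ contributes $-w(u,v)$. The indicator-function bookkeeping and the $S=V$ sanity check are fine additions, and the argument is unaffected by parallel arcs such as the anchors.
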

\begin{proof}
Any arc $(u,v)\in\partial^+S$ contributes $w(u,v)$ to $\bg(S)$ since we add $w(u,v)$ to the tail $u\in S$, while the subtraction at the head $v\notin S$ is not counted in $\bg(S)$. Similarly, any arc $(u,v)\in\partial^-S$ contributes $-w(u,v)$ to $\bg(S)$. All other arcs contribute $0$ since either $u,v\notin S$ in which case there is no contribution, or $u,v\in S$ in which case the contributions of $+w(u,v)$ and $-w(u,v)$ cancel out.
\end{proof}

\begin{lemma}\label{lem:approx-balanced}
If the minimum ratio cut has ratio $\bg(S)/w(\partial S)\ge-\epsilon$, then $w(\partial^+S)\ge\frac{1-\epsilon}{1+\epsilon}w(\partial^-S)$ for all $S\subseteq V$.
\end{lemma}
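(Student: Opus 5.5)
Fix an arbitrary $S\subseteq V$. We may assume $w(\partial S)\neq0$: if $w(\partial S)=0$ then $w(\partial^+S)=w(\partial^-S)=0$ and the inequality is trivial. Since the minimum ratio over all cuts is at least $-\epsilon$, the cut $S$ itself satisfies $\bg(S)\ge-\epsilon\, w(\partial S)$.

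We rewrite both sides of this inequality in directed terms. By the preceding lemma, $\bg(S)=w(\partial^+S)-w(\partial^-S)$. The undirected boundary is the disjoint union of the two directed boundaries, so $w(\partial S)=w(\partial^+S)+w(\partial^-S)$. Substituting both identities gives
\[ w(\partial^+S)-w(\partial^-S)\ge-\epsilon\bigl(w(\partial^+S)+w(\partial^-S)\bigr). \]
Rearranging yields $(1+\epsilon)\,w(\partial^+S)\ge(1-\epsilon)\,w(\partial^-S)$. Dividing by $1+\epsilon>0$ gives the claimed bound $w(\partial^+S)\ge\frac{1-\epsilon}{1+\epsilon}w(\partial^-S)$.

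There is no real obstacle here. The only care needed is to note that the hypothesis applies to every cut with nonzero boundary, and that cuts with zero boundary make the conclusion hold trivially.
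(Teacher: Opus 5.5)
Your proof is correct and is essentially the paper's argument. The paper performs the same substitution $\bg(S)=w(\partial^+S)-w(\partial^-S)$ and $w(\partial S)=w(\partial^+S)+w(\partial^-S)$, but in contrapositive form: it assumes the conclusion fails and derives $\bg(S)+\epsilon\,w(\partial S)<0$, whereas you argue directly. Your explicit handling of the case $w(\partial S)=0$ is a small point the paper leaves implicit, since its contrapositive hypothesis already forces $w(\partial^-S)>0$.
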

\begin{proof}
It suffices to show that a cut $S\subseteq V$ with $w(\partial^+S)<\frac{1-\epsilon}{1+\epsilon}w(\partial^-S)$ implies that $\bg(S)/w(\partial S)<-\epsilon$. By the previous lemma, we have $\bg(S)=w(\partial^+S)-w(\partial^-S)$. Together with $w(\partial S)=w(\partial^+S)+w(\partial^-S)$, we obtain
\begin{align*}
\bg(S)+\epsilon\cdot w(\partial S)&=(w(\partial^+S)-w(\partial^-S))+\epsilon\cdot(w(\partial^+S)+w(\partial^-S))
\\&=(1+\epsilon)\cdot w(\partial^+S)-(1-\epsilon)\cdot w(\partial^-S)
\\&<(1-\epsilon)\cdot w(\partial^-S)-(1-\epsilon)\cdot w(\partial^-S)
\\&=0,
\end{align*}
or equivalently $\bg(S)/w(\partial S)<-\epsilon$, as desired.
\end{proof}

\subsection{Dynamic data structure}

We use the following data structure defined in~\cite{van2024almost}:
\begin{definition}[$\alpha$-approximate min-ratio cut data structure, Definition 5.2 of~\cite{van2024almost}]
For a weighted undirected graph $H=(V,E,\bu)$ where $\bu\in\mathbb R^E$ and $\bu(e)\in[1/U,U]$\footnote{The only difference is that~\cite{van2024almost} enforces $\bu(e)\in[1,U]$ while we require the weaker $\bu(e)\in[1/U,U]$. However, we can scale all $\bu(e)$ and $\bg(v)$ by $U$, giving $\bu(e)\in[1,U^2]$ and $\bg(v)\in[-U^2,U^2]$, which preserves all ratios and still satisfies $\log U^2=\tilde{O}(1)$. We also scale the initial $\by$ by $1/U$ and apply their data structure with each $\eta$ in \textsc{ToggleCut}$(\eta)$ scaled by $1/U$, so that the value $\bu(e)(\by(u)-\by(v))$ remains unchanged. Assuming the original constraint $\bu(e)\in[1,U]$ and tracking the scaling by $U$ is cumbersome, so we instead assume the weaker $\bu(e)\in[1/U,U]$ directly.} for $\log U=\tilde{O}(1)$, an initial gradient $\bg\in\mathbb R^V$ where $\bg\bot\mathbbm 1$ and $\bg(v)\in[-U,U]$, an initial potential vector $\by\in\mathbb R^V$, and a detection threshold parameter $\Delta$, an $\alpha$-approximate min-ratio cut data structure $\mathcal D$ implicitly maintains the potential vector $\by$ and supports the following operations:
 \begin{itemize}
 \item $\textsc{InsertEdge}(e), \textsc{DeleteEdge}(e)$: inserts/deletes the edge $e$ with weight $\bu(e)$.
 \item $\textsc{UpdateGradient}(u,v,\delta)$: updates $\bg(u)\gets\bg(u)+\delta$ and $\bg(v)\gets\bg(v)-\delta$.
 \item $\textsc{Potential}(v)$: returns $\by(v)$.
 \end{itemize}

The data structure maintains a tuple $(g,u)$ where $g\le0$ and $u>0$ such that for some implicit cut $C\subseteq V$ we have $\bg(C)=g$ and $\bu(\partial C)\le u$, and the (negative) ratio $g/u$ is at most $1/\alpha$ times the minimum possible over all $C\subseteq V$.

The data structure has an additional operation \textsc{ToggleCut}$(\eta)$ based on the current tuple $(g,u)$ with implicit cut $C$: given a parameter $0<\eta\le1/u$, the data structure implicitly updates $\by(v)\gets\by(v)+\eta$ for all $v\in C$. Then, the data structure returns some edge set $E'$ that includes every edge $e=\{u,v\}\in E$ for which the value $\bu(e)(\by(u)-\by(v))$ has changed by at least an additive $\Delta$ since it was inserted/last returned in $E'$.
\end{definition}

\begin{theorem}[Theorem 5.4 of~\cite{van2024almost}]\label{thm:min-ratio-cut-data-structure}
There is a deterministic $\alpha$-approximate min-ratio cut data structure for $\alpha=n^{o(1)}$ such that every operation is processed in amortized time $n^{o(1)}\log U$. Furthermore, the total number of edges returned by the algorithm after $t$ calls to \textsc{ToggleCut}$(\cdot)$ is at most $n^{o(1)}\cdot t/\Delta$.
\end{theorem}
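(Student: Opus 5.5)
The plan is to obtain the statement from the original Theorem~5.4 of~\cite{van2024almost} rather than to rebuild the data structure. The only discrepancy between our definition and theirs is the weight range $\bu(e)\in[1/U,U]$ instead of $[1,U]$, so the work is a reduction: I will show that an instance with weights in $[1/U,U]$ can be simulated by an instance of the original data structure with parameter $U'=U^2$, with identical cuts, ratios, potentials and edge reports. This is what the footnote to the definition asserts.

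\textbf{The simulation.} Given $(H,\bu,\bg,\by,\Delta)$, the inner structure is run on the following scaled instance.
\begin{itemize}
\item Weights $\bu'=U\bu\in[1,U^2]$ and gradient $\bg'=U\bg\in[-U^2,U^2]$, still with $\bg'\perp\mathbbm 1$.
\item Initial potentials $\by'=\by/U$ and the same threshold $\Delta$.
\item Every \textsc{InsertEdge} and \textsc{DeleteEdge} is forwarded unchanged, except that the inserted weight is scaled by $U$.
\item Every \textsc{UpdateGradient}$(u,v,\delta)$ is forwarded as \textsc{UpdateGradient}$(u,v,U\delta)$.
\item Every \textsc{ToggleCut}$(\eta)$ is forwarded as \textsc{ToggleCut}$(\eta/U)$.
\item \textsc{Potential}$(v)$ returns $U$ times the inner answer.
\end{itemize}

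\textbf{Checking the simulation.} I would then verify the following invariants.
\begin{enumerate}
\item For every cut $C$, $\bg'(C)/\bu'(\partial C)=\bg(C)/\bu(\partial C)$. Hence a tuple $(g',u')$ maintained inside corresponds to $(g'/U,\,u'/U)$ outside, with the same implicit cut and the same $\alpha$-approximation guarantee.
\item The constraint $0<\eta\le 1/u$ becomes $\eta/U\le 1/u'=1/(Uu)$, which is exactly the inner precondition.
\item The inner potential satisfies $\by'=\by/U$ throughout, since both the initial value and every toggle increment are divided by $U$.
\item The quantity $\bu'(e)(\by'(u)-\by'(v))=\bu(e)(\by(u)-\by(v))$ is unchanged. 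So the set of edges whose value moved by at least $\Delta$ is the same, and the returned sets $E'$ coincide.
\end{enumerate}

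\textbf{Running time.} Each operation costs $n^{o(1)}\log U^2=n^{o(1)}\log U$ amortized. The bound of $n^{o(1)}\cdot t/\Delta$ on the total number of returned edges carries over verbatim, because the number of \textsc{ToggleCut} calls and the threshold $\Delta$ are unchanged.

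The step I expect to need the most care is confirming that the guarantees of~\cite{van2024almost} depend on the weights only through ratios and through the quantities $\bu(e)\Delta\by$, and that the rescaled potentials do not violate any other internal precondition. If some internal invariant of their construction (tree-chain stretch bounds, shortcutting thresholds) used absolute potential magnitudes, I would need to re-examine those parts of their analysis. My expectation is that the only external interface is the one listed in the definition, so the black-box reduction suffices.
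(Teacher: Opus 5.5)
Your reduction is exactly the one the paper gives in the footnote to the definition: scale $\bu$ and $\bg$ by $U$, and scale the initial $\by$ and every $\eta$ by $1/U$. This preserves all cut ratios and every value $\bu(e)(\by(u)-\by(v))$, makes the cost $n^{o(1)}\log U^2=n^{o(1)}\log U$, and your invariant checks (including $\eta/U\le 1/(Uu)$) spell out what the footnote asserts. Your closing worry about internal invariants is not needed, since the inner structure is used as a black box on a valid instance of the original definition, so its stated guarantees hold however it is implemented.
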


\subsection{The algorithm}

The algorithm initializes an $\alpha$-approximate min-ratio cut data structure (\Cref{thm:min-ratio-cut-data-structure}) on the input graph, where $\by=0$ and all edges have weight $f(u,v)=\frac1{\max\{\by(v)-\by(u),0\}+1/c(u,v)}=c(u,v)$. The detection threshold parameter $\Delta$ is fixed to $\Delta=\epsilon/(8\alpha)$. The algorithm only updates arc/edge weights when they pass the detection threshold from the data structure. That is, whenever the data structure returns some edge set $E'$, the algorithm makes the following updates for each edge $e\in E'$ with corresponding arc $(u,v)$: first, call $\textsc{DeleteEdge}(e)$ and $\textsc{UpdateGradient}(u,v,-\bu(e))$ to remove the old contribution, and then call $\textsc{InsertEdge}(u,v,f(u,v))$ and $\textsc{UpdateGradient}(u,v,f(u,v))$, where the new weight $f(u,v)=\frac1{\max\{\by(v)-\by(u),0\}+1/c(u,v)}$ is computed by querying $\textsc{Potential}(u)$ and $\textsc{Potential}(v)$ to determine $\by(u)$ and $\by(v)$.

While the data structure maintains an implicit cut $C\subseteq V$ with $g/u\le-\epsilon/\alpha$, the algorithm calls $\textsc{ToggleCut}(\eta)$ with $\eta=\frac\epsilon{16\alpha u}$. Otherwise, if $g/u>-\epsilon/\alpha$, then since the data structure is $\alpha$-approximate, the minimum ratio cut is at least $-\epsilon$. By \Cref{lem:approx-balanced}, we have $w(\partial^+S)\ge\frac{1-\epsilon}{1+\epsilon}w(\partial^-S)$ for all $S\subseteq V$.

\subsection{Stability}

Since edge weights only get updated when they are returned in $E'$, the weights $w(u,v)$ maintained by the data structure only approximate the true flow values $f(u,v)$ for the current potential function $\by$. We begin with a stability guarantee on the weight function that is used throughout the analysis.
\begin{lemma}[Stability]\label{lem:stability}
Consider three values $x_1,x_2\in\mathbb R$ and $c>0$, and define the weights\linebreak $w_i=\frac1{\max\{x_i,0\}+1/c}$ for $i\in\{1,2\}$. For any parameter $0\le\delta<1$, if $|x_1-x_2|\le\delta/w_1$, then $\frac1{1+\delta}w_1\le w_2\le\frac1{1-\delta}w_1$.
\end{lemma}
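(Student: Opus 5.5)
Rather than compare $w_1$ and $w_2$ directly, I will work with their reciprocals $d_i=1/w_i=\max\{x_i,0\}+1/c$, which are affine-like in $x_i$. The two desired inequalities $\frac1{1+\delta}w_1\le w_2\le\frac1{1-\delta}w_1$ are equivalent (all quantities being positive, since $d_i\ge1/c>0$) to
\[ (1-\delta)\,d_1\le d_2\le(1+\delta)\,d_1, \]
that is, $|d_2-d_1|\le\delta\, d_1$. So the plan is to bound the additive change in $d$ and then divide by $d_1$.

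The key step is the observation that the map $x\mapsto\max\{x,0\}$ is $1$-Lipschitz. Hence
\[ |d_2-d_1|=\bigl|\max\{x_2,0\}-\max\{x_1,0\}\bigr|\le|x_2-x_1|\le\delta/w_1=\delta\, d_1, \]
where the last inequality is exactly the hypothesis $|x_1-x_2|\le\delta/w_1$ rewritten using $1/w_1=d_1$. This gives $(1-\delta)d_1\le d_2\le(1+\delta)d_1$.

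Finally, I invert. Since $\delta<1$, the lower bound $(1-\delta)d_1$ is strictly positive, so taking reciprocals reverses the inequalities and yields $\frac{1}{(1+\delta)d_1}\le\frac1{d_2}\le\frac1{(1-\delta)d_1}$, i.e., $\frac1{1+\delta}w_1\le w_2\le\frac1{1-\delta}w_1$. There is no real obstacle. The only points needing care are checking positivity before inverting, which is where $\delta<1$ is used, and using the $1$-Lipschitz property of $\max\{\cdot,0\}$. The latter avoids a case split on the signs of $x_1$ and $x_2$, which would otherwise be the one place where the argument could become fiddly.
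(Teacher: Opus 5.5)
Your proof is correct and is essentially the paper's argument. The paper also shows the denominator $\max\{x_2,0\}+1/c$ lies within $\pm\delta/w_1$ of $1/w_1$, using monotonicity of $\max\{\cdot,0\}$ together with $\max\{x_1\pm a,0\}$ versus $\max\{x_1,0\}\pm a$, which are the two one-sided halves of your $1$-Lipschitz observation. It then inverts, with $\delta<1$ keeping the lower denominator positive.
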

\begin{proof}
We bound
\begin{align*}
w_2=\frac1{\max\{x_2,0\}+1/c}
\ge\frac1{\max\{x_1+\delta/w_1,0\}+1/c}
\ge\frac1{\max\{x_1,0\}+1/c+\delta/w_1}
&=\frac1{1/w_1+\delta/w_1}
\\&=\frac{w_1}{1+\delta}
\end{align*}
and
\begin{align*}
w_2=\frac1{\max\{x_2,0\}+1/c}
\le\frac1{\max\{x_1-\delta/w_1,0\}+1/c}
\le\frac1{\max\{x_1,0\}+1/c-\delta/w_1}
&=\frac1{1/w_1-\delta/w_1}
\\&=\frac{w_1}{1-\delta}
\end{align*}
as desired.
\end{proof}

\begin{lemma}[Approximation of weights]\label{lem:approx-weights}
For any arc $(u,v)$, we have $\frac1{1+\Delta} w(u,v)\le f(u,v)\le\frac1{1-\Delta} w(u,v)$.
\end{lemma}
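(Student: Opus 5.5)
We derive \Cref{lem:approx-weights} from the stability lemma (\Cref{lem:stability}) and the detection guarantee of the data structure. Fix an arc $(u,v)$ and consider the last time its edge was inserted, either at initialization or on being returned in some $E'$. At that moment the algorithm set $w(u,v)=\frac1{\max\{x_1,0\}+1/c(u,v)}$ with $x_1=\by(v)-\by(u)$ for the potentials at that time. So $w(u,v)$ is exactly the value of $f(u,v)$ at insertion time. The current flow is $f(u,v)=\frac1{\max\{x_2,0\}+1/c(u,v)}$ with $x_2$ the current difference $\by(v)-\by(u)$. Taking $w_1=w(u,v)$ and $w_2=f(u,v)$, the claim is precisely the conclusion of \Cref{lem:stability} with $\delta=\Delta$. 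It remains to verify its hypothesis $|x_1-x_2|\le\Delta/w_1$.

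The data structure tracks the quantity $\bu(e)(\by(u)-\by(v))$ with $\bu(e)=w(u,v)=w_1$ fixed since insertion. Since the edge has not been returned since its last insertion, this quantity has changed by less than $\Delta$. That is,
\[ w_1\cdot|(\by_{\mathrm{now}}(u)-\by_{\mathrm{now}}(v))-(\by_{\mathrm{ins}}(u)-\by_{\mathrm{ins}}(v))|<\Delta, \]
which is exactly $w_1|x_2-x_1|<\Delta$. We also need $\Delta=\epsilon/(8\alpha)<1$, which holds for $\epsilon<1$, or more generally whenever $\alpha$ is large. Then \Cref{lem:stability} gives $\frac1{1+\Delta}w(u,v)\le f(u,v)\le\frac1{1-\Delta}w(u,v)$.

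The main subtlety is the bookkeeping between ``the value changed by at least $\Delta$ since last returned'' and the contrapositive we use. The guarantee is stated for changes occurring through \textsc{ToggleCut}. Potentials change only via \textsc{ToggleCut} in our algorithm, so the contrapositive applies at every moment between operations. One timing issue needs care. Right after a \textsc{ToggleCut} returns $E'$ and before the algorithm reinserts those edges, the bound may fail for edges in $E'$. The claim should therefore be read as holding after each batch of updates is processed, and edges in $E'$ are then reset so that $x_1=x_2$. We also need the weight scaling in the footnote to leave the product $\bu(e)(\by(u)-\by(v))$ unchanged, which the footnote asserts, so no extra factors appear.
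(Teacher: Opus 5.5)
Your proof is correct and is essentially the paper's argument: take the potentials at the edge's last (re)insertion, use the detection guarantee to get $w(u,v)\,|x_1-x_2|<\Delta$, and apply \Cref{lem:stability} with $\delta=\Delta$. Your extra remarks are valid details that the paper leaves implicit, namely that $\Delta<1$ is needed and that the bound holds once each batch of returned edges has been reinserted.
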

\begin{proof}
Let $\tilde\by$ be the potential function when the edge $(u,v)$ was last inserted or returned in $E'$. In particular, $w(u,v)=\frac1{\max\{\tilde\by(v)-\tilde\by(u),0\}+1/c(u,v)}$.
By the guarantee of the data structure, we have
\[ w(u,v)\cdot|(\tilde\by(u)-\tilde\by(v))-(\by(u)-\by(v))|<\Delta \iff |(\tilde\by(u)-\tilde\by(v))-(\by(u)-\by(v))|<\Delta/ w(u,v) .\]
Applying \Cref{lem:stability} with $x_1=\tilde\by(v)-\tilde\by(u)$, $x_2=\by(v)-\by(u)$, $c=c(u,v)$, and $\delta=\Delta$ finishes the proof.
\end{proof}

Finally, we show that if the algorithm stops calling $\textsc{ToggleCut}(\eta)$, at which point $g/u>-\epsilon/\alpha$, then $f$ is a $(1+O(\epsilon))$-pseudo-circulation. To obtain the $(1+\epsilon)$-pseudo-circulation promised by \Cref{lem:finding}, simply run the algorithm with $\epsilon$ scaled down by a constant factor.
\begin{lemma}
If $g/u>-\epsilon/\alpha$, then $f(\partial^+S)\ge(1-O(\epsilon))f(\partial^-S)$ for all $S\subseteq V$.
\end{lemma}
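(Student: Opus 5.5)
The argument runs in two stages. First we show that the weights $w$ held by the data structure are approximately balanced, using the $\alpha$-approximation guarantee together with \Cref{lem:approx-balanced}. Then we transfer this balance to the true flow $f$ using \Cref{lem:approx-weights}.

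For the first stage, let $(g,u)$ be the tuple maintained by the data structure, and note that the gradient it holds is exactly $\bg=\sum_{(u,v)\in E}w(u,v)(\mathbbm 1_u-\mathbbm 1_v)$ for the current weights $w$. This holds because every update replaces both the edge weight and its gradient contribution together. Since $g/u$ is at most $1/\alpha$ times the minimum ratio and $g/u>-\epsilon/\alpha$, the true minimum ratio cut with respect to $w$ is at least $\alpha\cdot g/u>-\epsilon$. (Both quantities are nonpositive, and the minimum is at most $0$ because $\bg\perp\mathbbm 1$.) \Cref{lem:approx-balanced} then gives
\[ w(\partial^+S)\ge\frac{1-\epsilon}{1+\epsilon}\,w(\partial^-S)\qquad\text{for all }S\subseteq V.\]

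For the second stage, we sum the arc-wise bounds of \Cref{lem:approx-weights} over $\partial^+S$ and $\partial^-S$. This gives $f(\partial^+S)\ge\frac1{1+\Delta}w(\partial^+S)$ and $w(\partial^-S)\ge(1-\Delta)f(\partial^-S)$. Chaining these with the first stage,
\[ f(\partial^+S)\ge\frac{(1-\epsilon)(1-\Delta)}{(1+\epsilon)(1+\Delta)}\,f(\partial^-S).\]
Since $\Delta=\epsilon/(8\alpha)\le\epsilon$, the prefactor is $1-O(\epsilon)$ for $\epsilon$ at most a small constant, which proves the lemma.

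No step here is a real obstacle. The only points needing care are the bookkeeping claim that the data structure's gradient matches the current (stale) weights $w$ rather than $f$, and getting the sign conventions right when converting the $\alpha$-approximation of a negative ratio into a lower bound on the true minimum ratio.
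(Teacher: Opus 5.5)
Your proposal is correct and follows essentially the same route as the paper. Both arguments use the $\alpha$-approximation to get that the minimum ratio cut with respect to the maintained weights $w$ is at least $-\epsilon$, apply \Cref{lem:approx-balanced}, and then transfer to $f$ via \Cref{lem:approx-weights}, giving the same chain $(1+\Delta)f(\partial^+S)\ge w(\partial^+S)\ge\frac{1-\epsilon}{1+\epsilon}w(\partial^-S)\ge\frac{1-\epsilon}{1+\epsilon}(1-\Delta)f(\partial^-S)$; your remarks on the gradient bookkeeping and the sign conventions just make explicit what the paper leaves implicit.
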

\begin{proof}
Since the data structure is $\alpha$-approximate, the minimum ratio cut is at least $-\epsilon$. By \Cref{lem:approx-balanced}, we have $w(\partial^+S)\ge\frac{1-\epsilon}{1+\epsilon}w(\partial^-S)$ for all $S\subseteq V$. Together with \Cref{lem:approx-weights}, we have $(1+\Delta)f(\partial^+S)\ge w(\partial^+S)\ge\frac{1-\epsilon}{1+\epsilon}w(\partial^-S)\ge\frac{1-\epsilon}{1+\epsilon}\cdot(1-\Delta)f(\partial^-S)$, so $(1+O(\epsilon))f(\partial^+S)\ge f(\partial^-S)$, as promised.
\end{proof}

\subsection{Energy bounds}\label{sec:energy-bounds}

To bound the running time, we need to track the number of calls to \textsc{ToggleCut}. Let $M$ be the maximum value of $\max_{u,v}|\by(u)-\by(v)|$ over the course of the algorithm, and define the \emph{energy} of arc $(u,v)$ as
\[ \mathcal E(u,v)=\int_{\by(v)-\by(u)}^M\frac1{\max\{x,0\}+1/c(u,v)}\,dx .\]
Observe that
\[ \mathcal E(u,v)=
\begin{cases}
\ln(M+1/c(u,v))-\ln(\by(v)-\by(u)+1/c(u,v))&\text{if }\by(v)\ge\by(u),
\\\ln(M+1/c(u,v))-\ln(1/c(u,v))+c(u,v)\cdot(\by(u)-\by(v))&\text{if }\by(v)\le\by(u).\end{cases} \]

The following lemma shows that each call to \textsc{ToggleCut} decreases the total energy by an additive $1/n^{o(1)}$.
\begin{lemma}[Energy decrease]\label{lem:energy-decrease}
On each call to \textsc{ToggleCut}, the total energy decreases by at least $\Omega(\epsilon^2/\alpha^3)$.
\end{lemma}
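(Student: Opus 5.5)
The plan is to compute the first-order change in energy caused by raising $\by$ by $\eta$ on the implicit cut $C$, and then bound the second-order error using the stability lemma. Only arcs in $\partial C$ change their energy. Write $x(u,v)=\by(v)-\by(u)$, so that $\mathcal E(u,v)=\int_{x}^M h(z)\,dz$ with $h(z)=\frac1{\max\{z,0\}+1/c(u,v)}$. Then $\frac{d}{dx}\mathcal E=-h(x)=-f(u,v)$.

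For an arc $(u,v)\in\partial^+C$ (tail $u\in C$), raising $\by(u)$ by $\eta$ decreases $x$ by $\eta$. The energy therefore increases by $\int_{x-\eta}^{x}h\le \eta\cdot\max_{z\in[x-\eta,x]}h(z)$. For an arc in $\partial^-C$, $x$ increases by $\eta$, and the energy decreases by $\int_x^{x+\eta}h\ge\eta\cdot\min_{z\in[x,x+\eta]}h(z)$.

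The net change is then at most
\[ \eta\Bigl(\sum_{\partial^+C}\max h-\sum_{\partial^-C}\min h\Bigr). \]
The goal is to compare this with $\eta\,(w(\partial^+C)-w(\partial^-C))=\eta\,\bg(C)=\eta g\le -\eta\epsilon u/\alpha$, using the data structure guarantee $g/u\le-\epsilon/\alpha$.

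\textbf{Error terms.} Two errors separate the exact $h$ values from the weights $w$. The first is the lag between $f$ and $w$: by \Cref{lem:approx-weights}, $f=(1\pm O(\Delta))w$. The second is the variation of $h$ over the interval of length $\eta$. For each arc in $\partial C$ we have $w(e)\le \bu(\partial C)\le u$ (taking $\bu=w$), so $\eta\le\frac{\epsilon}{16\alpha u}\le \frac{\epsilon}{16\alpha}\cdot\frac1{w(e)}$. Combined with $f\approx w$, this gives $|\eta|\le \delta/f(e)$ with $\delta=O(\epsilon/\alpha)$. \Cref{lem:stability} then shows that $h$ stays within a factor $1\pm O(\epsilon/\alpha)$ of $f(e)$ on the whole interval, and hence within $1\pm O(\epsilon/\alpha)$ of $w(e)$, using $\Delta=\epsilon/(8\alpha)$.

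\textbf{Combining.} Summing over the boundary, the net change is at most
\[ \eta\bigl(g+O(\epsilon/\alpha)\,w(\partial C)\bigr)\le \eta\bigl(-\epsilon u/\alpha+c'\,(\epsilon/\alpha)u\bigr). \]
The main obstacle is the constant: the error coefficient $c'$ must be strictly below $1$, say at most $1/2$. This is where the specific choices $\Delta=\epsilon/(8\alpha)$ and $\eta=\epsilon/(16\alpha u)$ enter. Tracking the constants, the lag contributes about $2\Delta\le\epsilon/(4\alpha)$ relative error and the variation contributes about $2\epsilon/(16\alpha)$. Together these are below $\frac12\cdot\epsilon/\alpha$, which gives a decrease of at least $\eta\epsilon u/(2\alpha)=\frac{\epsilon^2}{32\alpha^2}$.

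This decrease is in fact $\Omega(\epsilon^2/\alpha^2)\ge\Omega(\epsilon^2/\alpha^3)$, which is the claim. If the constants turn out to be tighter than expected, the extra factor of $\alpha$ in the target leaves slack.

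One subtlety remains: the upper limit $M$ is defined as the maximum over the whole run. Since $M$ is a fixed constant, it does not affect differences of energy. We also need $x-\eta\le M$ so that the integrals stay in range, and this holds by the definition of $M$.
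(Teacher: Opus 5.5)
Your proof is correct. It shares the paper's skeleton: energy integrals over the boundary arcs, \Cref{lem:stability} across the $\eta$-interval, and \Cref{lem:approx-weights} for the lag between $f$ and $w$. Where it departs is in how you use the min-ratio guarantee.

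\textbf{The paper's route.} It converts $\bg(C)/\bu(\partial C)\le-\epsilon/\alpha$ into the multiplicative imbalance $w(\partial^+C)\le\frac{\alpha-\epsilon}{\alpha+\epsilon}w(\partial^-C)$. From this it gets a net change of $-\eta\cdot\Omega(\epsilon/\alpha)\cdot w(\partial^-C)$, and then has to lower-bound $\eta\,\bu(\partial C)$. Because $\eta$ is defined through the reported $u$ rather than the true $\bu(\partial C)$, this needs the extra fact $\bu(\partial C)\ge u/\alpha$, derived from $\alpha$-approximate optimality. That step costs a factor of $\alpha$ and is why the paper ends at $\Omega(\epsilon^2/\alpha^3)$.

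\textbf{Your route.} You stay additive and measure everything in units of $u$. The main term is $\eta g\le-\eta\epsilon u/\alpha$, read straight off $g/u\le-\epsilon/\alpha$. The error is at most $\rho\,\bu(\partial C)\le\rho u$ with $\rho<\epsilon/(2\alpha)$. So you never need the lower bound on $\bu(\partial C)$, and the decrease is $\eta u\cdot\frac{\epsilon}{2\alpha}=\frac{\epsilon^2}{32\alpha^2}$. This is a factor of $\alpha$ better than the paper. The gain does not matter for the $m^{1+o(1)}$ running time, since $\alpha=n^{o(1)}$, but your argument is shorter and tighter.

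\textbf{Constants.} Your constants hold even after compounding the two error factors: you get $\rho\le\frac{11}{45}\cdot\frac{\epsilon}{\alpha}$, provided $\epsilon/\alpha\le1$. This is the same implicit regime the paper relies on, since it uses $f\le2w$, i.e.\ $\Delta\le\frac12$. A shortcut is to apply \Cref{lem:stability} once, starting from the potential difference $\tilde x$ at the arc's last update. Every $z$ in the $\eta$-interval satisfies $|z-\tilde x|<(\Delta+\frac{\epsilon}{16\alpha})/w$, which gives $\rho\le\frac{3\epsilon}{13\alpha}$ in one step.
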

\begin{proof}
On each call to $\textsc{ToggleCut}(\eta)$, the implicit cut $C$ satisfies $g/u\le-\epsilon/\alpha$, where $\bg(C)=g<0$ and $\bu(\partial C)\le u$. We also have $\bu(\partial C)\ge u/\alpha$ since otherwise, if $\bu(\partial C)<u/\alpha$, then the ratio $\bg(C)/\bu(\partial C)$ would be less than $\bg(C)/(u/\alpha)=\alpha\cdot g/u$, contradicting the fact that $g/u$ is at most $1/\alpha$ times the min-ratio cut.

We now analyze the contribution of $\textsc{ToggleCut}(\eta)$ to the total energy. Let $w$ and $\by$ be the weight and potential function immediately before $\textsc{ToggleCut}(\eta)$. For each arc $(u,v)\in\partial^+C$, the value of $\by(u)$ is increased by $\eta$, so the energy $\mathcal E(u,v)$ increases by
\[ \int_{\by(v)-\by(u)-\eta}^{\by(v)-\by(u)}\frac1{\max\{x,0\}+1/c(u,v)}\,dx .\]
Similarly, for each arc $(u,v)\in\partial^-C$, the value of $\by(v)$ is increased by $\eta$, so the energy $\mathcal E(u,v)$ decreases by
\[ \int_{\by(v)-\by(u)}^{\by(v)-\by(u)+\eta}\frac1{\max\{x,0\}+1/c(u,v)}\,dx .\]
By the approximation of weights (\Cref{lem:approx-weights}), we have $f(u,v)\le2w(u,v)$, so for any arc $(u,v)\in\partial C$, we have $u\ge\bu(\partial C)\ge w(u,v)\ge\frac12f(u,v)$.
Since $\eta=\frac\epsilon{16\alpha u}\le\frac\epsilon{8\alpha\cdot f(u,v)}$, by the stability guarantee (\Cref{lem:stability}) with $\delta=\frac\epsilon{8\alpha}$, we have
\begin{align*}
\frac1{1+\epsilon/(8\alpha)}f(u,v)\le\frac1{\max\{x,0\}+1/c(u,v)}&\le\frac1{1-\epsilon/(8\alpha)}f(u,v)
\\&\qquad\qquad\text{ for all }x\in[\by(v)-\by(u)-\eta,\,\by(v)-\by(u)+\eta].
\end{align*}
In particular, the energy increase of each arc $(u,v)\in\partial^+C$ is at most $\eta\cdot\frac1{1-\epsilon/(8\alpha)}\cdot f(u,v)$, and the energy decrease from each arc $(u,v)\in\partial^-C$ is at least $\eta\cdot\frac1{1+\epsilon/(8\alpha)}\cdot f(u,v)$. Summed over all such arcs, the net change in energy is at most
\[ \eta\cdot\frac1{1-\epsilon/(8\alpha)}\cdot f(\partial^+C)-\eta\cdot\frac1{1+\epsilon/(8\alpha)}\cdot f(\partial^-C) .\]

Recall that $\bg(C)/\bu(\partial C)\le g/u\le-\epsilon/\alpha$ and $\bg(C)=w(\partial^+C)-w(\partial^-C)$ and $\bu(\partial C)=w(\partial^+C)+w(\partial^-C)$. We have
\begin{align*}
\alpha\cdot\bg(C)+\epsilon\cdot\bu(\partial C)\le0&\iff(\alpha+\epsilon)\cdot w(\partial^+C)-(\alpha-\epsilon)\cdot w(\partial^-C)\le0
\\&\iff w(\partial^+C)\le\frac{\alpha-\epsilon}{\alpha+\epsilon}\cdot w(\partial^-C).
\end{align*}
Together with the approximation of weights (\Cref{lem:approx-weights}), we have
\[ f(\partial^+C)\le\frac1{1-\Delta}\cdot w(\partial^+C)\le\frac1{1-\Delta}\cdot\frac{\alpha-\epsilon}{\alpha+\epsilon}\cdot w(\partial^-C)\le\frac1{1-\Delta}\cdot\frac{\alpha-\epsilon}{\alpha+\epsilon}\cdot(1+\Delta)\cdot f(\partial^-C). \]
Since $\Delta=\epsilon/(8\alpha)$, it follows that the net change in energy is at most
\begin{align*}
&\eta\cdot\frac1{1-\epsilon/(8\alpha)}\cdot f(\partial^+C)-\eta\cdot\frac1{1+\epsilon/(8\alpha)}\cdot f(\partial^-C)
\\\le{}&\eta\cdot\frac1{1-\epsilon/(8\alpha)}\cdot\frac{1+\epsilon/(8\alpha)}{1-\epsilon/(8\alpha)}\cdot\frac{\alpha-\epsilon}{\alpha+\epsilon}\cdot f(\partial^-C)-\eta\cdot\frac1{1+\epsilon/(8\alpha)}\cdot f(\partial^-C)
\\\le{}&\eta\cdot-\Omega\left(\frac\epsilon\alpha\right)\cdot f(\partial^-C)
\\\le{}&\eta\cdot-\Omega\left(\frac\epsilon\alpha\right)\cdot w(\partial^-C).
\end{align*}
Recall that $\bu(\partial C)\ge u/\alpha$, which means that $\eta=\frac\epsilon{16\alpha u}\ge\frac\epsilon{16\alpha^2\bu(\partial C)}$. Also, since $\bg(C)<0$, we have $w(\partial^-C)\ge\frac12w(\partial C)=\frac12\bu(\partial C)$. We conclude that the net change in energy is at most
\[ \frac\epsilon{16\alpha^2\bu(\partial C)}\cdot-\Omega\left(\frac\epsilon\alpha\right)\cdot\frac12\bu(\partial C)=-\Omega\left(\frac{\epsilon^2}{\alpha^3}\right), \]
as desired.
\end{proof}

\subsection{Running time}

By \Cref{thm:min-ratio-cut-data-structure}, the algorithm runs in $n^{o(1)}\cdot t/\Delta$ where $t$ is the number of calls to \textsc{ToggleCut}. Initially, each arc has energy at most $\ln(M+1)+\ln U$, so the total energy is at most $m\cdot(\ln(M+1)+\ln U)$. By \Cref{lem:energy-decrease}, each call to \textsc{ToggleCut} decreases the total energy by $\Omega(\epsilon^2/\alpha^3)=\epsilon^2/n^{o(1)}$, so $t\le m^{1+o(1)}(\ln(M+1)+\ln U)/\epsilon^2$ and the total running time is $m^{1+o(1)}(\ln(M+1)+\ln U)/\epsilon^3$.

Since arc capacities are polynomially bounded, we have $U=n^{O(1)}$. Finally, we claim that $M=n^{O(1)}$ as well. Suppose for contradiction that $M$ is too large, and consider the moment of the algorithm at which $\by(u)-\by(v)=M$ for some $u,v\in V$. Consider a path from $u$ to $v$ of at most $n-1$ arcs in the strongly connected graph. There is an arc $(u',v')$ on the path with $\by(u')-\by(v')\ge M/(n-1)$, which has energy at least $c(u',v')\cdot(\by(u')-\by(v'))\ge M/(n-1)$. In particular, the total energy is at least $M/(n-1)$. However, by the previous argument, the total energy starts at $m\cdot(\ln(M+1)+\ln U)$ and can only decrease. For large enough $M=n^{O(1)}$, we obtain a contradiction.

\subsection*{Acknowledgement} The authors used ChatGPT 5.6 Sol Pro to assist with calculations and review. The authors assume responsibility for all content.

\bibliographystyle{alpha}
\bibliography{ref}

\end{document}